\definecolor{Red}{rgb}{1,0,0}
\newcommand{\BEAS}{\begin{eqnarray*}}
\newcommand{\EEAS}{\end{eqnarray*}}
\newcommand{\BEA}{\begin{eqnarray}}
\newcommand{\EEA}{\end{eqnarray}}
\newcommand{\BIT}{\begin{itemize}}
\newcommand{\EIT}{\end{itemize}}
\newcommand{\BNUM}{\begin{enumerate}}
\newcommand{\ENUM}{\end{enumerate}}
\newtheorem {Thm}{Theorem}
\newtheorem {prop} {Proposition}
\newcommand{\bp}{\begin{prop}}
\newcommand{\ep}{\end{prop}}
\def \beq {\begin{equation} }
\def \eeq {\end{equation} }
\def \bea {\begin{eqnarray} }
\def \eea { \end{eqnarray} }
\def \beas {\begin{eqnarray*} }
\def \eeas { \end{eqnarray*} }
\def \Var{{\rm Var}}
\def \E {{\rm E}}
\def \l({\left(}
\def \r){\right)}
\def \bi {\begin{itemize}}
\def \ei {\end{itemize}}
\begin{document}

\title{A semiparametric mixture method for local false discovery rate estimation}
 
\author{Seok-Oh Jeong\\ Hankuk University of Foreign Studies \footnote{seokohj@hufs.ac.kr} \and Dongseok Choi   \\Oregon Health and Science University\footnote{choid@ohsu.edu} \and Woncheol Jang \\ Seoul National University \footnote{wcjang@snu.ac.kr}}
\date{\today}
\maketitle
\vspace*{-0.3in}

 \begin{abstract}
We propose a semiparametric mixture model to estimate  local false discovery rates in multiple testing problems. The two pilars of the proposed approach are Efron's empirical null principle and log-concave density estimation for the alternative distribution. Compared to existing methods, our method can be easily extended to high dimension. Simulation results show that our method outperforms other existing methods and we illustrate its use via case studies in astronomy and microarray.

\vspace{.2in}

\medskip
\noindent {\it Keywords and Phrases: Brown-Stein model, false discovery rate, log-concave, mixture model, semiparametric model}  

\end{abstract}

\section{Introduction}
We consider a specialization of the Brown-Stein model used by \citet{Efron08a}. Suppose that $(\delta_1, Z_1), \ldots, (\delta_N, Z_N)$ are independently  generated by the following hierarchical model:
\begin{eqnarray}\label{mixture1}
\delta &\sim& g(\delta) = p_0 g_0 ( \delta) + (1-p_0) g_1 (\delta) \\
Z | \delta &\sim& f(z | \delta) = \phi_{\delta, 1} (z) 
\end{eqnarray}
where  $g_0 (\delta) = \phi_{\mu, \sigma^2} (\delta) $ and  $\phi_{\mu, \sigma^2}$ is the normal probability density function with mean $\mu$ and variance $\sigma^2$. Here we assume the probability $p_0$ and   the probability density function $g_1$  are unknown.

The  marginal distribution of $Z$ is given as follows:
$$
f(z) =    \int \phi (z -\delta ) g( \delta ) d\delta = p_0 \phi_{\mu, \tau^2} (z) + (1-p_0)  f_1 (z)
$$
 where $\phi ( \cdot) = \phi_{0,1}( \cdot)$, $\tau^2 = \sigma^2+1$ and  $f_1 (z)  =  \int \phi(z -\delta ) g_1 ( \delta  ) d\delta$.   

The model appears in three contexts: (i) in multiple testing problems (fMRI, microrarray), probit transformed $p$-values  under $H_0$ follows  the standard normal distribution while the mariganl probability density function of  the probit transformed  $p$-values associated with $H_1$ is unknown  \citep{Efron08a};   (ii) in variable/basis selection,  a mixture prior is used to achieve sparsity \citep{Johnstone05}; (iii) in prediction, Fisher's discriminant function can be regularized using a connection with the false discovery rate theory \citep{Efron08b}.  Due to close connections between the false discovery rate and all aforementioned problems, we will explain our model in the framework of  multiple testing through this paper. 

Suppose that  we observe $N$ cases each with some effect size $\delta_i$  from either null ($\delta_i =0$) or alternative ($\delta_i \ne 0$) with the prior probability $p_0= \Pr \{\mbox{null}\}$ or $p_1 =\Pr \{ \mbox{alternative}\} = 1-p_0 $ and with $z$-values having density $f$:
$$
f(z) =    p_0 f_0 + (1-p_0)  f_1 (z).
$$
Here $Z$ can be either test statistics or probit-transformed $p$-values.

In multiple testing, adjusting for multiplicity is of great interest, To do so,  we may consider to control for either the local false discovery rate, $fdr(z) = \Pr (\delta_i =0\mid z_i )$  or  the False Discovery Rate,  $FDR(z) = \Pr ( \delta_i =0  \mid Z_i > z )$ assuming one-sided testing. In this paper, we focus on estimating $fdr(z)$ that can be viewed as the posterior probability of a case being from the null given $z$.  Under the mixture model above, it is straightforward to show
$$
fdr(z) = p_0 f_0 (z)/f (z). 
$$

It is natural to assume that $f_0$ follows the standard normal distribution. However,  \citet{Efron08a} suggested that theoretical null  distribution $N(0,1)$ may not be suitable for $f_0$ and proposed to  estimate $f_0$ with the {\em empirical null distribution} $N( \mu, \tau^2)$ where $\mu$ and $\tau$ are to be estimated from data. Furthermore, he used  the {\em zero assumption} to estimate $\pi_0$ and employed {\it Lindsey's method} to estimate the marginal density $f$.  Note that his estimates  may not follow the original mixture structure since the estimates are given from separate procedures. In other words, the mixture only used to define $fdr$, but not for making inference of $fdr$. 
 
In this paper, we propose a semiparametric mixture model that cover a rich class of distributions for alternative and estimate $\pi_0, f_0, f_1$ simultaneously based on the proposed mixture model.  Furthermore, our method can be easily extended to  high dimension.

The remainder of the paper is organized as follows.  Section 2 introduces  log-concave densities and our semiprarametric mixture model. In Section 3, we present numerical studies to show that our method outperforms other  existing methods.  Section 4 presents  case studies including estimating distributions of radial velocities in astronomy  and  ophthalmologic  gene expression data analysis. Section 5 concludes this paper.

\section{Semiparametric Mixture Model}

\subsection{Mixture model with smoothed log-concave}
In this section, we propose  a semiparametric  mixture model for $f$.  We assume that $g_1$ belongs to log-concave densities.  The semiparametric mixture prior results in another semiparametric mixture distribution for marginal distribution of $z$.
 
 It is reasonable to assume that  the alternative distribution for $p$-values is log-concave and defined on $(0, \frac{1}{2})$. The following theorem justifies the choice of log-concave as an alternative distribution for profit transformed $p$-values.

 \begin{Thm}
 Let $H(t)$ and $h(t)$ denote the {\sf cdf} and {\sf pdf} of the alternative distribution of $p\mbox{-value}=Pr (Z > z)$.  Suppose that $h(t)$ is log-concave and continuous differentiable and  $(0,\frac{1}{2})$.  Define $Z=\Phi^{-1}(1-p)$, probit-transformed $p$-value.  Then the {\sf pdf} of $Z$ is log-concave under the alternative.
  \end{Thm}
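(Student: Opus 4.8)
The plan is to compute the alternative density of $Z$ explicitly, reduce the claim to a statement about two factors, and then invoke the stability of log-concavity under products together with a composition principle.

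First I would invert the transformation. From $Z=\Phi^{-1}(1-p)$ we get $p=1-\Phi(z)$ and $dp/dz=-\phi(z)$, and as $p$ runs over $(0,\frac12)$ the variable $z$ runs over $(0,\infty)$. The change-of-variables formula then yields the alternative density
$$ f_Z(z)=h\bigl(1-\Phi(z)\bigr)\,\phi(z),\qquad z>0, $$
where $\phi,\Phi$ are the standard normal pdf and cdf. Since a product of log-concave functions is log-concave (the logarithm of the product is a sum of concave functions), it suffices to prove that each of $\phi(z)$ and $h(1-\Phi(z))$ is log-concave.

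The factor $\phi$ is immediate, as $\log\phi(z)=-z^2/2-\frac12\log(2\pi)$ is concave. For the composite factor I would isolate two ingredients. First, $u(z)=1-\Phi(z)$ is convex on $(0,\infty)$: differentiating twice gives $u''(z)=-\phi'(z)=z\phi(z)\ge 0$ for $z>0$. Second, I would use the following elementary composition principle, which requires no differentiability of $h$: if $h$ is non-increasing and log-concave and $u$ is convex, then $h\circ u$ is log-concave. This follows directly from the defining inequalities — for $z_\theta=\theta z_1+(1-\theta)z_2$, convexity of $u$ gives $u(z_\theta)\le\theta u(z_1)+(1-\theta)u(z_2)$, the monotonicity of $h$ reverses this inequality inside $\log h$, and the concavity of $\log h$ closes the chain, yielding $\log h(u(z_\theta))\ge\theta\log h(u(z_1))+(1-\theta)\log h(u(z_2))$. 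Applying this with the convex $u=1-\Phi$ shows $h(1-\Phi(z))$ is log-concave, and multiplying by $\phi$ completes the proof.

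The step I expect to be the crux is the non-increasing property of $h$, which the composition principle genuinely requires. A direct second-derivative computation makes this transparent: writing $L=\log h$, one finds $(\log f_Z)''(z)=L''(u)\phi(z)^2+z\,L'(u)\phi(z)-1$, in which the first and last terms are nonpositive but the cross term carries the sign of $L'=h'/h$. If $h$ were allowed to increase steeply this middle term can dominate and log-concavity fails, so some monotonicity is unavoidable. The resolution is that an alternative $p$-value density in a one-sided problem is non-increasing — equivalently, its cdf $H$ is concave — which forces $h'\le 0$ and hence $L'\le 0$, making the cross term nonpositive. I would therefore carry the argument through under this standard property of alternative $p$-value densities, noting that it is precisely the hypothesis doing the essential work.
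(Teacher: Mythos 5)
Your proof follows essentially the same route as the paper's: both arrive at the density $f_1(z)=h(1-\Phi(z))\,\phi(z)$ (you by a direct change of variables, the paper by differentiating $F_1(z)=1-H(1-\Phi(z))$), both invoke closure of log-concavity under products, and both reduce the problem to the log-concavity of the composite factor $h(1-\Phi(\cdot))$ using the convexity of $1-\Phi$ on $(0,\infty)$ (which the paper phrases as convexity of $\Phi$ on $\mathbb{R}^{-}$ and settles by citing Theorem~7 of Bagnoli and Bergstrom, whereas you prove the composition rule from the defining inequalities). The one substantive difference is that you make explicit the monotonicity hypothesis that the composition step genuinely needs: $h\circ u$ with $u$ convex is log-concave only when $h$ is non-increasing. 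Your second-derivative identity $(\log f_1)''(z)=L''(u)\phi(z)^2+zL'(u)\phi(z)-1$ with $L=\log h$ shows this is not a removable artifact of the method: taking $h(t)\propto e^{ct}$ on $(0,\tfrac12)$, which is log-concave (indeed log-linear) and smooth, gives $(\log f_1)''(1)=c\,\phi(1)-1>0$ once $c>1/\phi(1)\approx 4.13$, so the theorem as literally stated in the paper is false, and the paper's proof silently inherits whatever monotonicity condition is embedded in the cited Theorem~7. Your version --- adding the standard assumption that the alternative $p$-value density is non-increasing (equivalently, $H$ concave), which is natural for a one-sided alternative --- is the correct statement, and flagging that hypothesis as the one doing the essential work is a genuine improvement on the paper's argument rather than a defect of yours.
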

 \begin{proof}
 By Theorem 1 in \citet{Bagnoli05}, $H(t)$ is  also log-concave on $(0,\frac{1}{2})$.  Let $F_1(z)$ denote {\sf cdf}  of the alternative distribution of $Z$. Then
 $$
 F_1(z) = {\sf Pr} (Z  \le z)  = {\sf Pr} ( \Phi^{-1} (1-p) \le z)  = {\sf Pr} ( 1-p \le \Phi (z) ) = 1-H( 1-\Phi (z))
 $$
  
 Therefore, the {\sf pdf} of $z$ is given as follows:
 $$
 f_1(z) = \frac{d}{dz} F_1(z) = h(1-\Phi(z)) \phi(z).
 $$
Note that $\Phi(z)$ is convex on ${\mathbb R}^{-}$.  Since $h$ is log-concave, $h(1-\Phi(z))$ is log-concave by Theorem 7 in \citet{Bagnoli05}.  Furthermore, a product of log-concave functions is also log concave. Therefore $f_1(z)$ is log-concave.
 \end{proof}

 Advantages of the class of log-concave densities that is a subset of unimodal densities are well studied in \citet{Walther02, Walther09}. \citet{Walther02} showed the existence of  the nonparametric MLE of a univariate log-concave density that can be computed via an efficient algorithm such as  an active set algorithm and an iterative convex minorant algorithm \citep{Dumbgen11}. However, the MLE  may have sharp discontinuities at the boundary of its support as well as sharp kinks. To remedy this issues, \citet{Dumbgen09} introduce a smoothed log-concave $\tilde f_1$, given by 
 $$
 \tilde f_1 (t) = \int  \phi_{z, a^2} (t) \hat f_1 (z) dz.
 $$
 for some bandwidth $a >0$. Here $\hat f_1$ is the nonparametric MLE of $f$. Since it is convolution of two log-concave densities, the smoothed log-concave also belongs to the class of log-concave
 densties. 
 
 Let $\tilde F$ and $\hat F$ be the distribution functions of $\tilde f$ and $\hat f$ respectively.  Since $\tilde f$ is convolution of $N(0, a^2)$ and $\hat f$, it is easy to show 
 $$
 \Var (\tilde F) = a^2 + \Var (\hat F) . 
 $$
 
 Based on the above equation,  \citet{Dumbgen09} suggest to choose the bandwidth 
 $$
 a = \sqrt{ \Var(\hat F) - \Var(\tilde F)},
 $$
 where
 $$
 \Var(\hat F)  =\frac{1}{N-1}\sum_{i=1}^N ( z_i - \bar{z} )^2, \quad \Var(\tilde F) = \int (z - \bar z)^2 \hat f(z) dz,
 $$
 with  $\bar{z} = \sum_{i=1}^N z_i /N$.
 
Hence,  the proposed semiparametric estimator of $f$ is 
$$
\hat p_0 \phi_{\hat \mu, \hat \tau^2} (z) + (1-\hat p_0) \tilde f_1 (z) .
$$

\subsection{EM Algorithm}

It is natural to consider an EM-type algorithm to fit the proposed semiparametric model. To do so, we first define the likelihood function:
\begin{eqnarray*}
\ell (\mu, \tau^2, z, \Delta) &=&  \sum_{i=1}^N \left[ \Delta_i \log \phi_{\mu, \tau^2} (z_i) + ( 1- \Delta_i ) \log f_1 (z_i) \right] \\&&+ \sum_{i=1}^N \left[\Delta_i \log p_0 + (1-\Delta_i) \log (1-p_0)\right]
\end{eqnarray*}
where $\Delta$ is a latent variable indicating the group membership. 

Following \citet{Chang07}, we first run the EM algorithm for a Gaussian mixture  to get initial values of $\mu$ and $\tau$. Then the EM algorithm for the proposed semiparametric model is given below: 
\begin{enumerate}
 \item E-step: compute the posterior probability: 
 $$\gamma_i = \E ( \Delta_i | \mu, \tau^2, z)= \frac{\hat p_0 \phi_{\hat \mu, \hat \tau^2} (z_i) } {\hat p_0 \phi_{\hat \mu, \hat \tau^2} (z_i) + (1-\hat p_0)\tilde f_1 (z_i)}$$ 

\item M-step: compute the smoothed log-concave estimator  $\tilde f_1$ based on $z_i$ with weights $\gamma_i$   and
$$
\hat \mu = \frac{ \sum_{i=1}^N \gamma_i z_i}{\sum_{i=1}^N  \gamma_i}, \quad \hat\tau^2 = \frac{ \sum_{i=1}^N \gamma_i( z_i - \hat \mu)^2}{\sum_{i=1}^N \gamma_i}, \quad \hat p_0  = \frac{1}{N} \sum_{i=1}^N \gamma_i .
$$
\end{enumerate}
To compute the smoothed log-concave MLE $\tilde f$ in M-step, {\sf R} package {\sf logcondens} \citep{Dumbgen11} is used.  


\subsection{Extension to Multivariate Cases} 

Suppose one observe multivariate statistics $\mathbf{Z}=(Z_1, \ldots, Z_k)$ and each statistic provides unique information for data. It is easy to extend the concept of the $fdr$ to high dimension:
$$
fdr(\mathbf{Z}) = p_0\frac{f_0 (\mathbf{Z})}{f(\mathbf{Z})} 
$$
However, applying Efron's method to multivariate is not straightforward. For example, Lindsey's method is no longer valid to estimate the null distribution even for 2-dimension. 

\citet{Ploner06} proposed to extend the local {\sf fdr}   to 2-dimension to address the problem of gene expression data with small variance. Specifically, they computed 2 dimensional  local {\sf fdr}  with  $t$-test statistics and standard error estimates for each gene. They  estimated $f_0$ with discrete smoothing of binomial data after binning the data.  However, their method still requires a smoothing parameter and has an issue with boundary bias. On the other hand, our method is straightforward to extend to multivariate and is not involved in choosing a smoothing parameter. 

\citet{Cule10} extended \citet{Dumbgen09} to multivariate settings and implemented multivariate smoothed log-concave estimation in  {\sf R} package {\sf logConcDEAD} \citep{Cule09}.

  \section{Simulation Studies}

In this section we investigate the performance of the proposed semiparametric approach with simulation studies where we can compare the results with Efron's method. We conduct $M=500$ Monte Carlo experiments with the sample size $N=1,000$ and  use  three measures of performance: empirical False Discovery Rate (FDR), empirical  False Nondiscovery Rate (FNR) and the Root Mean Squared Error (RMSE) which is given by 

$$
{\rm RMSE} = \frac{1}{M}\sum_{r=1}^M \sqrt{{\sum_{i: {\rm fdr}(z_i^{()r})\le 0.5} \left\{\widehat{\rm fdr}(z_i^{(r)})-{\rm fdr}(z_i^{(r)})\right\}^2}\left/{\sum_{i=1}^N I({\rm fdr}(z_i^{(r)})\le 0.5)}\right.}.
$$
where $z_1^{(r)}, z_2^{(r)},\cdots, z_N^{(r)}$ is the $r$-th random sample generated from the Monte Carlo simulation. Here, we only consider ${\rm fdr}(z_i^{(r)})\le 0.5$ in the definition of the RMSE because the accuracy of $fdr$ estimation is required mainly for the region where the $fdr$ is small.  For example, we are usually interested in the behavior of $fdr$ estimates where ${\rm fdr}(z)\le 0.2$.

\begin{table}[t]
  \begin{center}
  \caption{\label{model} Simulation scenarios:  $*$ indicates the convolution of two densities. ${\rm Gamma}_2(\alpha, \beta)$ denotes the bivariate gamma distribution of $(V_1, V_2)\stackrel{iid}\sim {\rm Gamma(\alpha, \beta)}$.}
   \begin{tabular}{lccccccc}
  \hline
    \multicolumn{2}{c}{Scenario} & & $p_0$ & & $f_0$ & & $f_1$ \\
    \hline
    Univariate & 1 & & 0.95  & & $N(0,1)*N(0,10^{-6})$ & & $N(0,1)*N(3.5, 0.5)$ \\
    & 2 & & 0.90  & & $N(0,1)*N(0,10^{-6})$ & & $N(0,1)*N(3.5, 0.5)$ \\
    & 3 & & 0.80  & & $N(0,1)*N(0,10^{-6})$ & & $N(0,1)*N(3.5, 0.5)$ \\
    & 4 & & 0.95  & & $N(0,1)*N(0,10^{-6})$ & & $N(0,1)*{\rm Gamma}(12, 0.25)$ \\
    & 5 & & 0.90  & & $N(0,1)*N(0,10^{-6})$ & & $N(0,1)*{\rm Gamma}(12, 0.25)$ \\
    & 6 & & 0.80  & & $N(0,1)*N(0,10^{-6})$ & & $N(0,1)*{\rm Gamma}(12, 0.25)$ \\
  \hline
    Bivariate & 1 & & 0.95  & & $N(\mathbf{0}_2,\Sigma)*N(\mathbf{0}_2, 10^{-6}I_2)$ & & $N(\mathbf{0}_2,\Sigma)*N((3.5,3.5)', 0.5I_2)$ \\
    & 2 & & 0.90  & & $N(\mathbf{0}_2,\Sigma)*N(\mathbf{0}_2, 10^{-6}I_2)$ & & $N(\mathbf{0}_2,\Sigma)*N((3.5,3.5)', 0.5I_2)$ \\
    & 3 & & 0.80  & & $N(\mathbf{0}_2,\Sigma)*N(\mathbf{0}_2, 10^{-6}I_2)$ & & $N(\mathbf{0}_2,\Sigma)*N((3.5,3.5)', 0.5I_2)$ \\
    & 4 & & 0.95  & & $N(\mathbf{0}_2,\Sigma)*N(\mathbf{0}_2, 10^{-6}I_2)$ & & $N(\mathbf{0}_2,\Sigma)*{\rm Gamma}_2(12, 0.25)$ \\
    & 5 & & 0.90  & & $N(\mathbf{0}_2,\Sigma)*N(\mathbf{0}_2, 10^{-6}I_2)$ & & $N(\mathbf{0}_2,\Sigma)*{\rm Gamma}_2(12, 0.25)$ \\
    & 6 & & 0.80  & & $N(\mathbf{0}_2,\Sigma)*N(\mathbf{0}_2, 10^{-6}I_2)$ & & $N(\mathbf{0}_2,\Sigma)*{\rm Gamma}_2(12, 0.25)$ \\
  \hline
  \end{tabular}

   \end{center}
\end{table}

\subsection{Simulations for univariate models}

We consider 6 scenarios for simulations. For the first 3 scenarios, we assume normal mixtures for null and alternative distributions with different $p_0$. The next 3 scenarios are the same as the first 3 except using a log-concave distribution as an alternative. 

Tables \ref{uni_p0est} and  \ref{uni_rmse_locfdr} summarize the simulation results. Table \ref{uni_p0est} shows that the proposed approach stably yields good $p_0$ estimates for a wide range of target values while Efron's method tends to give  overestimates $p_0$ especially when $p_0$ is relatively small. When the alternative follows a log-concave, the performance of our method are similar to that with a normal alternative. However, Efron method is slightly worse with the log-concave alternative. As a result, multiple testing procedure with Efron's method becomes too conservative when $p_0$ is small. Table \ref{uni_rmse_locfdr} reports similar results. The RMSEs of Efron's method  increase as $p_0$ decreases and are worse with the log-concave alternative.

Figure \ref{fig:uni1} illustrates the comparisons of the three performance measures in Scenario 1 with  various $fdr$ levels (from 0.05 to 0.25). Efron's method reports low FDR values regardless of $fdr$ values at the cost of higher FNR values. While there is no clear relationship between $fdr$ and FDR, \citet{Efron07} indicated $fdr = 0.20$ corresponds to FDR values between 0.05 and 0.15. Hence our method seems to control the FDR reasonably well given $fdr$ levels with lower FNR values. Similar patterns are founds in Figures \ref{fig:uni2} - \ref{fig:uni6} for other scenarios. 

 \begin{table}[t]
  \begin{center}
    \caption{\label{uni_p0est} Summary of $p_0$ estimation results from univariate models}

  \begin{tabular}{cccccccc}
  \hline
          &       & & \multicolumn{2}{c}{Proposed} & & \multicolumn{2}{c}{Efron} \\
          \cline{4-5} \cline{7-8}
    Scenario & $p_0$ & & Mean & standard error & & Mean & standard error \\
  \hline
    1 & 0.95  & & 0.9293 & 0.0006 & & 0.9490 & 0.0021 \\
    2 & 0.90  & & 0.8840 & 0.0006 & & 0.8971 & 0.0017 \\
    3 & 0.80  & & 0.7910 & 0.0007 & & 0.8878 & 0.0027 \\
    4 & 0.95  & & 0.9316 & 0.0006 & & 0.9577 & 0.0022 \\
    5 & 0.90  & & 0.8915 & 0.0006 & & 0.9169 & 0.0017 \\
    6 & 0.80  & & 0.8057 & 0.0008 & & 0.9215 & 0.0024 \\
  \hline
  \end{tabular}
  \end{center}
\end{table}

\begin{table}[b]
  \begin{center}
    \caption{\label{uni_rmse_locfdr} RMSE comparison of $fdr$ estimates for univariate models}

  \begin{tabular}{ccccccc}
  \hline
      & & \multicolumn{2}{c}{Proposed} & & \multicolumn{2}{c}{Efron} \\
                   \cline{3-4} \cline{6-7}
    Scenario & & Mean & standard error & & Mean & standard error \\
  \hline
    1 & & 0.1059 & 0.0017 & & 0.0728 & 0.0026 \\
    2 & & 0.0807 & 0.0015 & & 0.0712 & 0.0025 \\
    3 & & 0.0581 & 0.0013 & & 0.2520 & 0.0044 \\
    4 & & 0.1074 & 0.0018 & & 0.0891 & 0.0033 \\
    5 & & 0.0754 & 0.0016 & & 0.1185 & 0.0033 \\
    6 & & 0.0456 & 0.0013 & & 0.3730 & 0.0039 \\
  \hline
  \end{tabular}
  \end{center}
\end{table}

\begin{figure}[b]
   \centering
   \includegraphics[width=6in,keepaspectratio=TRUE]{./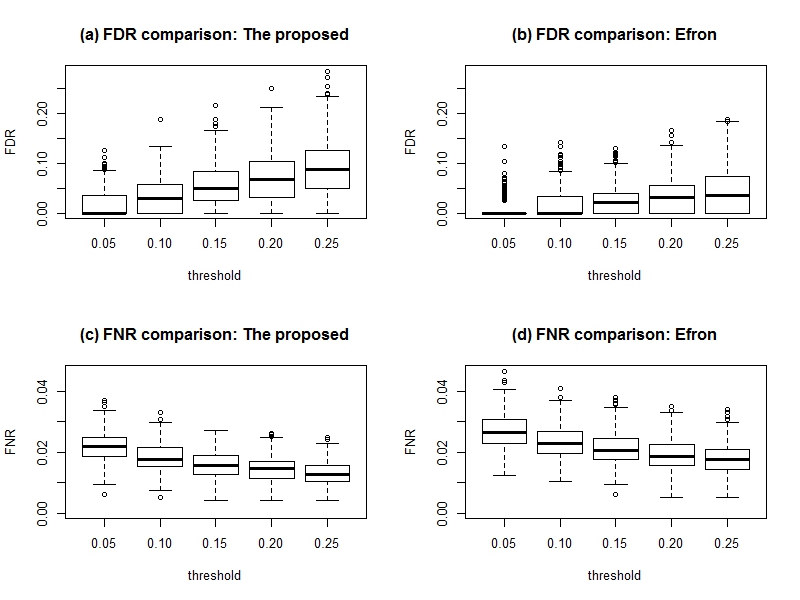} 
   \caption{Empirical False Discovery Rate and  empirical False Nondiscovery Rate from  the proposed method and Efron's method for Scenario 1 in the univariate simulations. The threshold are $fdr$ values. The boxplots are drawn with the values resulted from $M=500$ Monte Carlo simulations.}
   \label{fig:uni1}
\end{figure}

\begin{figure}[htbp]
   \centering
   \includegraphics[width=6in,keepaspectratio=TRUE]{./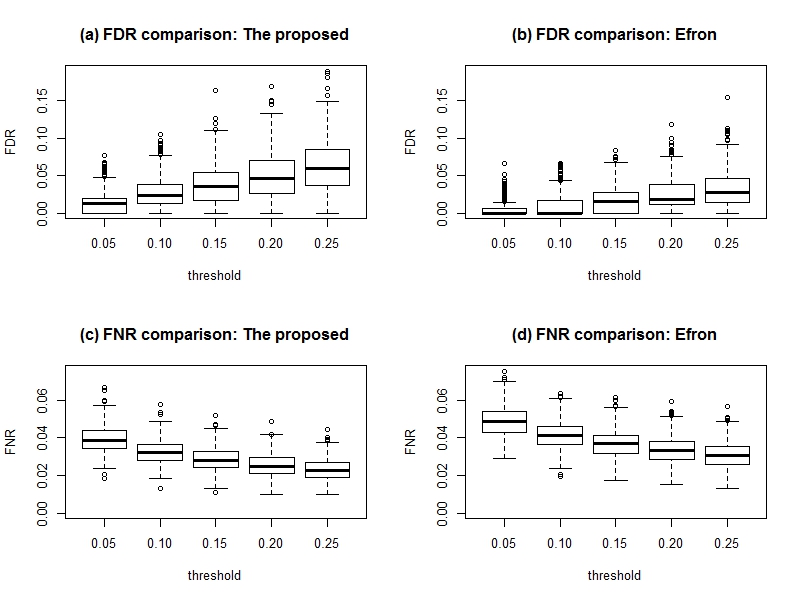} 
   \caption{Empirical False Discovery Rate and  empirical False Nondiscovery Rate from  the proposed method and Efron's method for Scenario 2 in the univariate simulations. The threshold are $fdr$ values. The boxplots are drawn with the values resulted from $M=500$ Monte Carlo simulations.}  
    \label{fig:uni2}
\end{figure}

\begin{figure}[htbp]
   \centering
   \includegraphics[width=6in,keepaspectratio=TRUE]{./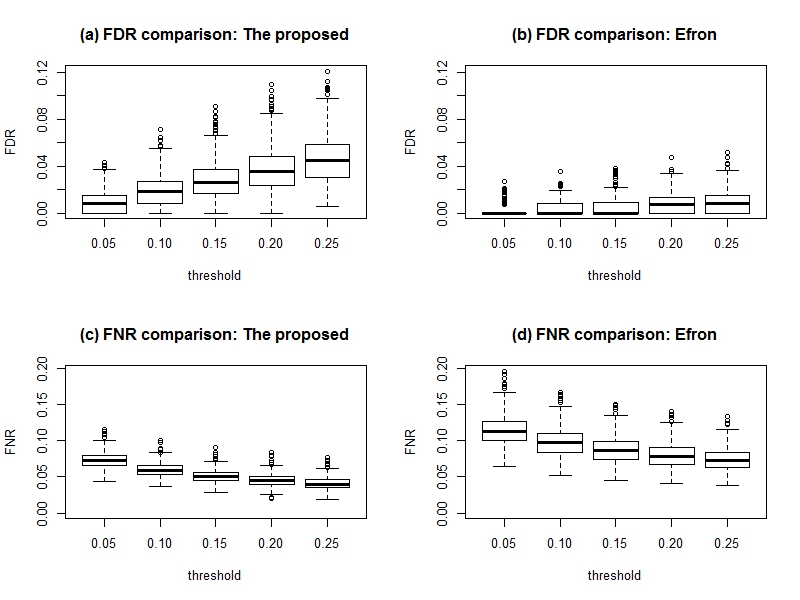} 
   \caption{Empirical False Discovery Rate and  empirical False Nondiscovery Rate from  the proposed method and Efron's method for Scenario 3 in the univariate simulations. The threshold are $fdr$ values. The boxplots are drawn with the values resulted from $M=500$ Monte Carlo simulations.}
   \label{fig:uni3}
\end{figure}

\begin{figure}[htbp]
   \centering
   \includegraphics[width=6in,keepaspectratio=TRUE]{./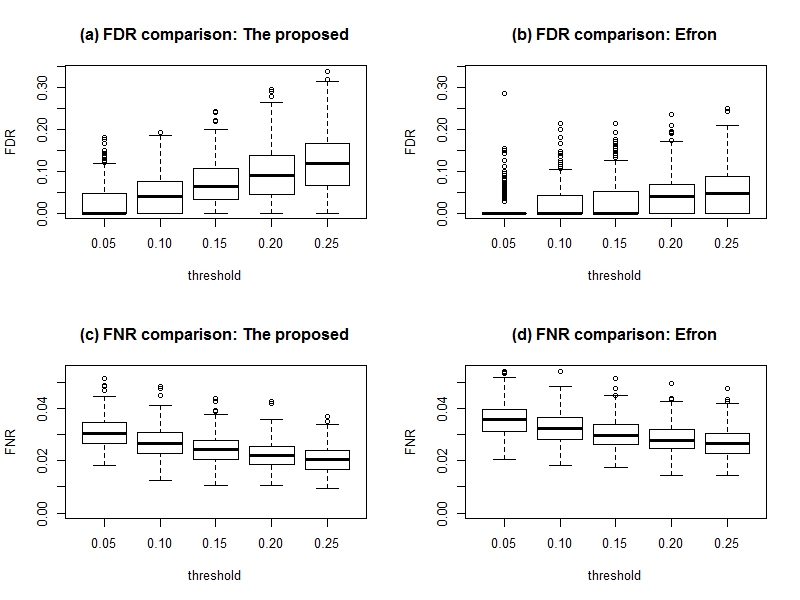} 
   \caption{Empirical False Discovery Rate and  empirical False Nondiscovery Rate from  the proposed method and Efron's method for Scenario 4 in the univariate simulations. The threshold are $fdr$ values. The boxplots are drawn with the values resulted from $M=500$ Monte Carlo simulations.}
   \label{fig:uni4}
\end{figure}

\begin{figure}[htbp]
   \centering
   \includegraphics[width=6in,keepaspectratio=TRUE]{./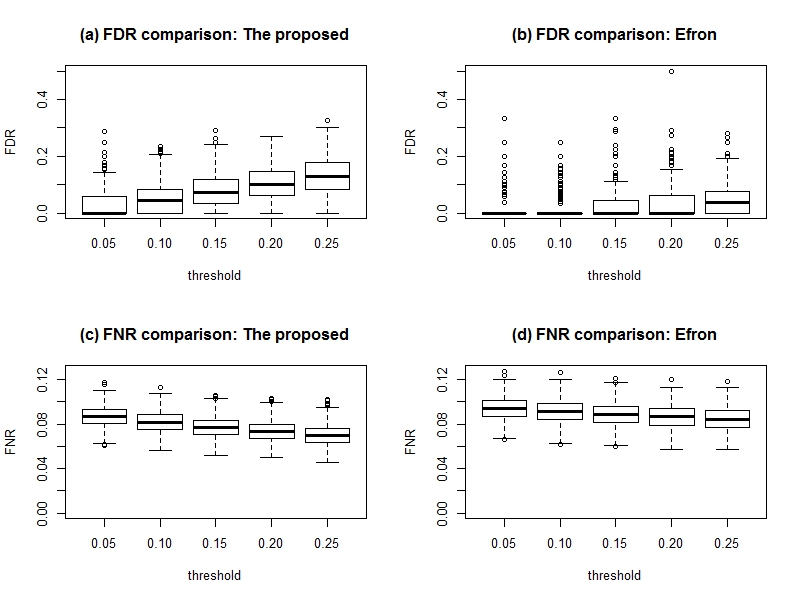} 
   \caption{Empirical False Discovery Rate and  empirical False Nondiscovery Rate from  the proposed method and Efron's method for Scenario 5 in the univariate simulations. The threshold are $fdr$ values. The boxplots are drawn with the values resulted from $M=500$ Monte Carlo simulations.}
   \label{fig:uni5}
\end{figure}

\begin{figure}[htbp]
   \centering
   \includegraphics[width=6in,keepaspectratio=TRUE]{./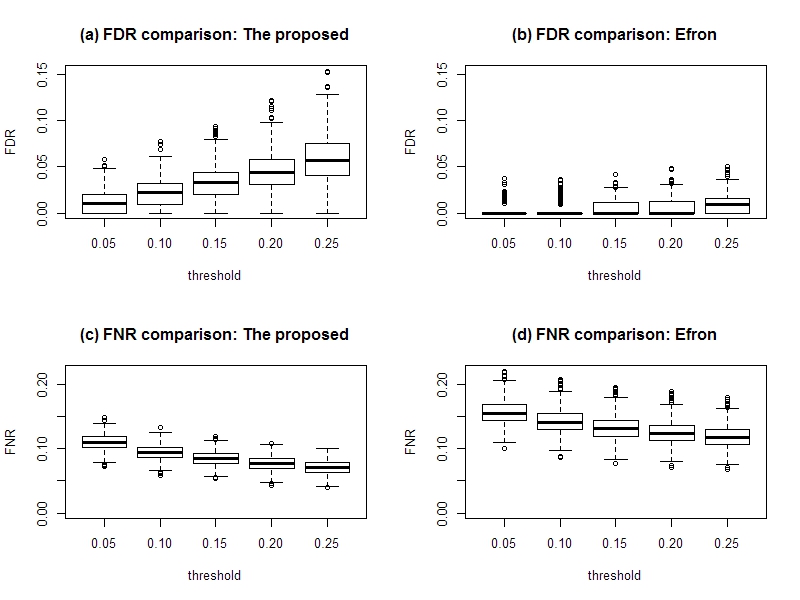} 
   \caption{Empirical False Discovery Rate and  empirical False Nondiscovery Rate from  the proposed method and Efron's method for Scenario 1 in the univariate simulations. The threshold are $fdr$ values. The boxplots are drawn with the values resulted from $M=500$ Monte Carlo simulations.}
   \label{fig:uni6}
\end{figure}

\subsection{Simulations for bivariate models}

We consider similar scenarios for bivariate models. Define  
$$
\mathbf{0}_2=\begin{pmatrix} 0 \\  0   \end{pmatrix}, \quad 
\mathbf{1}_2=\begin{pmatrix} 1 \\  1   \end{pmatrix}, \quad I_2 = \begin{pmatrix} 1 & 0 \\  0 &  1  \end{pmatrix}, \quad \Sigma = \begin{pmatrix} 1 & 0.3 \\ 0.3 &  1 \end{pmatrix}.
$$
The simulation scenarios are similar to those of univariate cases. In each scenario, the marginal distributions of null and alternative are the same as the null and alternative distributions of the corresponding  scenario in univariate models. By doing so, we want to find whether there is any advantage of 2-dimensional $fdr$ compared to 1-dimensional $fdr$. Like 

Figure \ref{fig:multi1} -- Figure \ref{fig:multi6} present the distributions of the $p_0$ estimates, empirical FDR and empirical FNR with various $fdr$  levels for each scenario. All the figures show that the proposed method works reasonably well. It is interesting to observe that the performance of the proposed method in bivariate models looks better than in univariate setting. Comparing the boxplots of FDR and FNR in panel (b) and (c) of the figures with those in the panels (b) and (c) of Figure \ref{fig:uni1} -- Figure \ref{fig:uni6}, we confirm that empirical FDR and FNR in the bivariate models are lower than those from the corresponding univariate models. Bivariate method could do better by taking account of extra information since it allows the test procedure an additional flexibility in deciding the boundary of the rejection region.

\begin{figure}[htbp]
   \centering
   \includegraphics[width=6in,keepaspectratio=TRUE]{./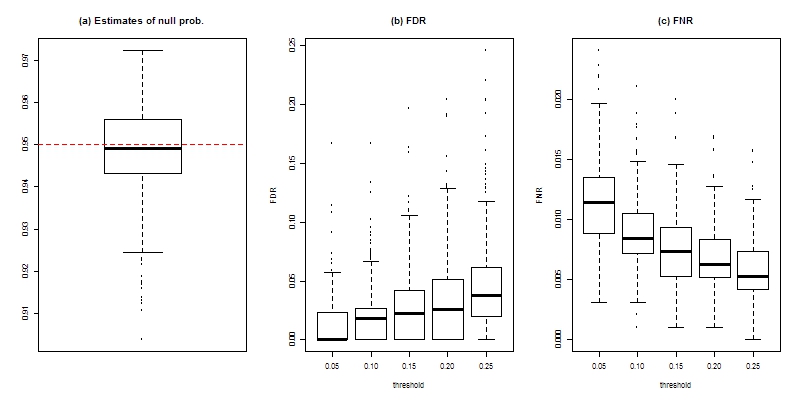} 
   \caption{(a) $p_0$ estimates, (b) empirical FDR,  (c) empirical FNR for Scenario 1 in the bivariate models with  $p_0=0.95$ and $f_1$ = Normal*Normal.}
   \label{fig:multi1}
\end{figure}

\begin{figure}[htbp]
   \centering
   \includegraphics[width=6in,keepaspectratio=TRUE]{./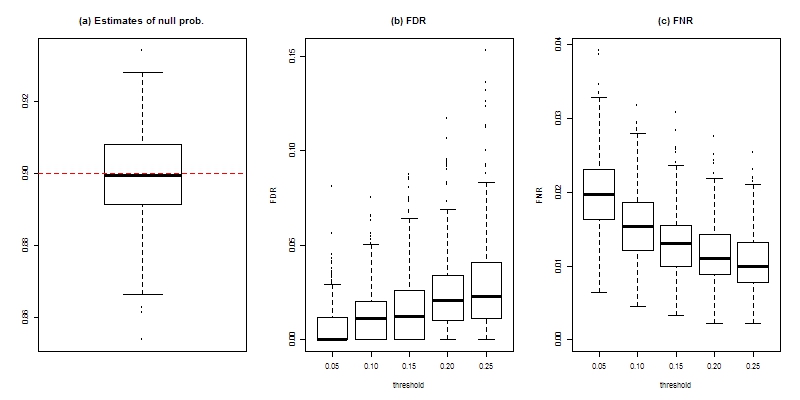} 
   \caption{(a) $p_0$ estimates, (b) empirical FDR,  (c) empirical FNR  for Scenario 2 in the bivariate models with  $p_0=0.90$ and $f_1$ = Normal*Normal.}
   \label{fig:multi2}
\end{figure}

\begin{figure}[htbp]
   \centering
   \includegraphics[width=6in,keepaspectratio=TRUE]{./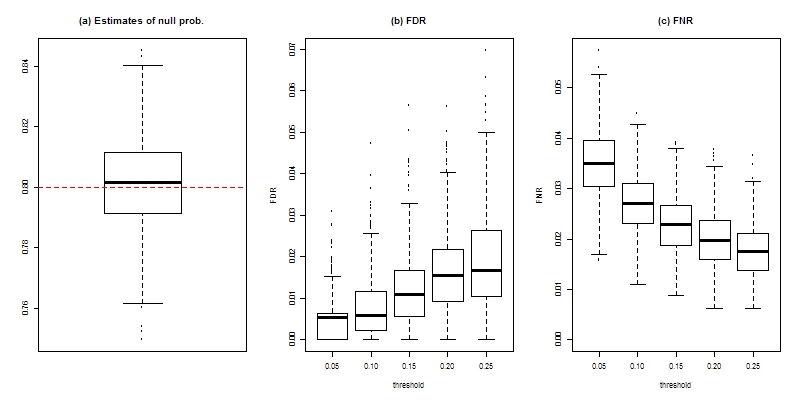} 
   \caption{(a) $p_0$ estimates, (b) empirical FDR,  (c) empirical FNR   for Scenario 2 in the bivariate models with  $p_0=0.80$ and $f_1$= Normal*Normal.}
   \label{fig:multi3}
\end{figure}

\begin{figure}[htbp]
   \centering
   \includegraphics[width=6in,keepaspectratio=TRUE]{./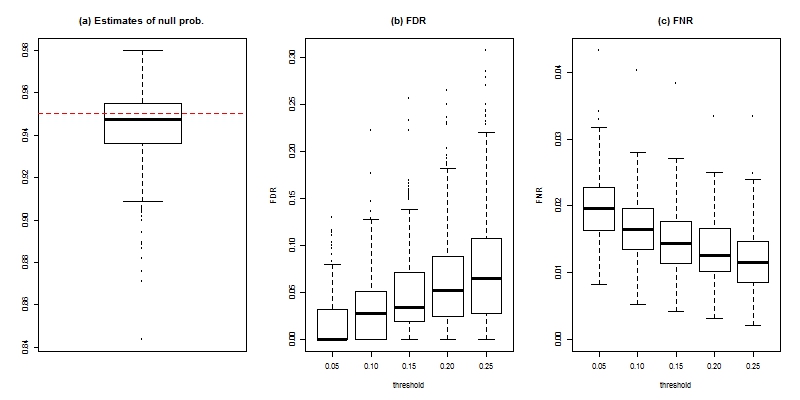} 
   \caption{(a) $p_0$ estimates, (b) empirical FDR,  (c) empirical FNR   for Scenario 2 in the bivariate models with 
   T  $p_0=0.95$ and $f_1$= Normal*Gamma.}
   \label{fig:multi4}
\end{figure}

\begin{figure}[htbp]
   \centering
   \includegraphics[width=6in,keepaspectratio=TRUE]{./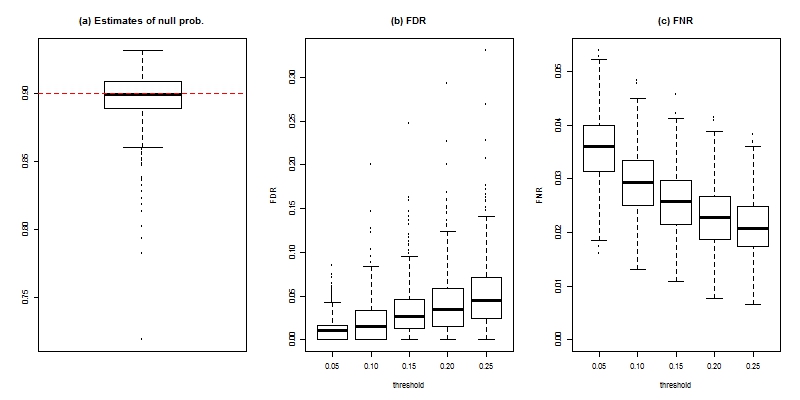} 
   \caption{(a) $p_0$ estimates, (b) empirical FDR,  (c) empirical FNR  for Scenario 2 in the bivariate models with  $p_0=0.90$ and $f_1$= Normal*Gamma.}
   \label{fig:multi5}
\end{figure}

\begin{figure}[htbp]
   \centering
   \includegraphics[width=6in,keepaspectratio=TRUE]{./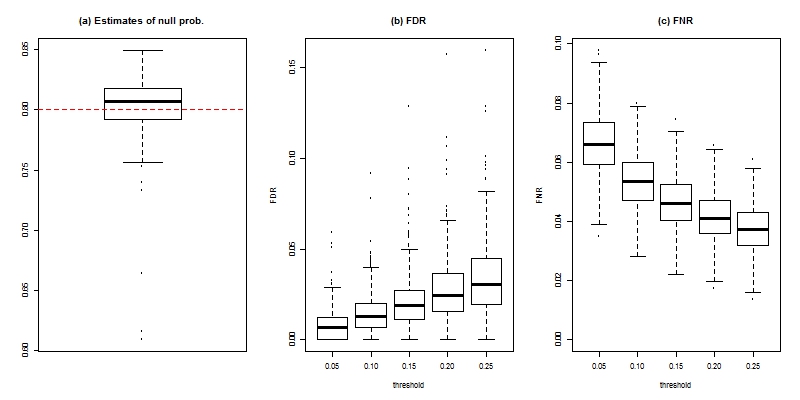} 
   \caption{(a) $p_0$ estimates, (b) empirical FDR,  (c) empirical FNR   for Scenario 2 in the bivariate models with  $p_0=0.80$ and $f_1$= Normal*Gamma.}
   \label{fig:multi6}
\end{figure}

 \section{Case Studies}

 \subsection{Ophthalmologic gene expression data analysis}
 
 In this subsection, we  consider  an Affymetrix data set with  a paired design \citep{Smith07, Choi08}.  Donor-matched retina and choroidal endothelial cells from three human cadavers were treated with Toxoplasma gondii tachyzoite or medium alone. We are interested in treatment effects. It measured gene expression levels in retina and choroid. Two tissues are next to each other, yet quite unique.  Gene expression profiling experiment were performed by Affymetrix Human Genome Focus array that contains 8746 human gene transcripts. More details on experiment and normalization can be found in \citet{Smith07}.  
 
 Let  $y_{ij}$ be $j$th gene expression value of $i$th subject. A simple linear model is fitted to the normalized data to find out significantly changed genes in each cell type.

\begin{eqnarray*}
&\mbox{Model 1 (retina):}&      y_{ij} = \beta_0 +  \beta_t \cdot \mbox{treatment} + \beta_{i}\cdot\mbox{subject}_i + \epsilon_{ij}, \\
&\mbox{Model 2 (choroid):}&      y_{ij} = \beta_0 +  \beta_t \cdot \mbox{treatment} + \beta_{i}\cdot\mbox{subject}_i + \epsilon_{ij}. 
\end{eqnarray*}

An empirical Bayes implemented in {\sf R} package limma \citep{Ritchie15} is used to fit the models to the normalized data.  With 1-D $fdr <0.05$, there are 248 and 173 significantly changed probe sets for the treatment group in Model 1 and 2, respectively. There are 68 common probe sets detected by both models and in total 353 probe sets are detected by at least one of them. In comparison, when applying the proposed 2-D $fdr$, there are 713 significant probe sets with 2-D $fdr <0.05$. The proposed 2-D  $fdr$ could detect more than twice of probe sets that are detected by 1-D  $fdr$ applied separately.

For comparison, another linear model that is more likely used in practice for the study is fitted with 1-D $fdr$. Now let $y_{ijk}$ be the $j$th gene expression level of $i$th subject in $k$th cell type.  
$$
\mbox{Model 3 :}  y_{ijk} = \beta_0 + \beta_t \cdot \mbox{treatment} + \beta_i \cdot \mbox{subject}_i + \beta_c \cdot \mbox{cell type} + \epsilon_{ijk} 
$$
Four hundred one probe sets had  $fdr <0.05$ less than 0.05, which is somewhat closer to 353 significant probe sets detected by 1-D $fdr$ apply to each cell type separately. When crosstabulated with 713 significant probe sets based 2-D $fdr$, there are 164 common probe sets between the two approaches.

A gene ontology analyses are performed by NIH DAVID (Database for Annotation, Visualization and Integrated Discovrey) to compare 713 significant probe sets based on 2-D $fdr$ to 401 based on 1-D $fdr$  of model 3. The former list returned 108 chart records of functional annotation with  $fdr< 0.05$. Interestingly, the analysis also detects that the list is abundant with genes from eye ($p=0.020$) and t-cell ($p=0.047$).  In comparison, similar analyses of the latter list of 401 probe sets returns only 85 significant chart records and $p$-values for abundance of genes in eye or t-cell are not significant (0.113 or 0.075, respectively).

 \subsection{Radial velocity distribution in astronomy}
 
 The radial velocity of a star play a key role in detecting exoplanet and is of great interest in astronomy \citep{Fabrizio11}.  The data are radial velocity measurements of 1266 luminous stars mainly from the Carina, a dwarf spheroidal galaxy and a companion of the Milky Way \citep{Patra16}. Some of data are  also from the Milky Way. Astronomers are interested in estimating the distribution of radial velocities that is believed to be a two-component mixture where one is a Gaussian and the other is a non-Gaussian. We are particularly interested in understanding the Gaussian component of the Radial Velocity distribution.  
 
 This data set was analyzed by \citet{Patra16}. They reported that the estimate of $p_0$=0.36 and the estimated location of the peak (=$\mu_0$) is 229.1 with the estimated stand deviation is 7.51.  Figure \ref{fig:rv} shows the mixture of two components estimated by the proposed method. The right one is considered as the Gaussian part \citep{Fabrizio11}. Our method presents $\hat p_0 = 0.37$, $\hat \mu_0 = 223.5$ and $\hat \sigma_0 = 10.7$.  With slightly different  data sets, \citet{Fabrizio11} reported the peaks around 220.4-221.4 with the standard deviation 8.0-11.7.

 \begin{figure}[ht]
   \centering
   \includegraphics[width=6in,keepaspectratio=TRUE]{./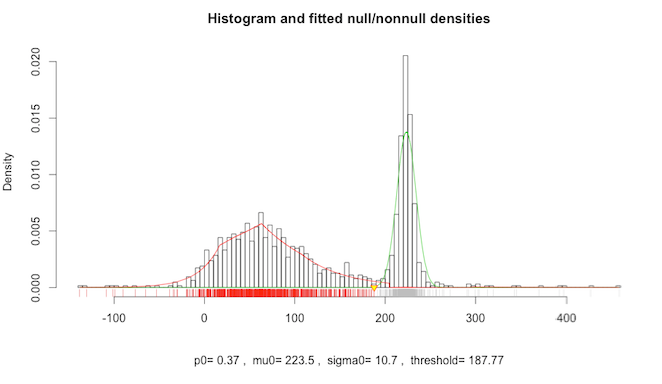} 
   \caption{Density of the Redial Velocity Distribution}
   \label{fig:rv}
\end{figure}
 
\section{Conclusion}

In this paper we proposed a semiparametric two component mixture model based on  normal and log concave densities. Our method is flexible  because it does not require smoothing parameter selection and  the log-concavity assumption of the alternative is reasonable. We discuss the connection between the proposed model and multiple testing. Our method estimate $fdr$ and the proportion of the null simultaneously and fit the alternative when necessary. 

We presented an EM-type algorithm to implement the proposed estimators and can be easily extended to multivariate settings. Our simulation studies suggested that the proposed method outperforms other existing method in both univariate and bivariate settings.  We present two  case studies which shows  the flexibility of the semiparametric mixture model in estimating Radial Velocity distributions in astronomy and  advantages of using 2-d $fdr$ over 1-d $fdr$ in microarray.
 
  In general, semiparametric mixture models are nonidentifiable without additional assumptions on $f_1$. Assuming that $f_1$ belongs to the location-shift family of distributions, \citet{Bordes06} showed the identifiability of the semiparametric mixture model under mild regularity conditions. \citet{Genovese04} also addressed identifiability for mixture models for $p$-value distribution under the assumption that the null is uniform. We left the identifiability of our mixture model as a future work.

\section*{Acknowledgement} 
The authors thanks Professor Bodhisattva Sen and Dr. Rohit Kumar Patra for sharing the Carina data and their codes.

\end{document}